\documentclass[letterpaper, 10 pt, conference]{ieeeconf}  

\IEEEoverridecommandlockouts                              

\usepackage{xcolor}
\usepackage{amsmath}
\usepackage{amsfonts}
\usepackage{graphicx}
\usepackage{xparse}
\usepackage{mathtools}
\usepackage{stmaryrd}
\usepackage[ruled,vlined]{algorithm2e}
\usepackage{caption}
\usepackage{subfigure}
\usepackage{subcaption}
\usepackage{tikz}
\usepackage{graphicx}
\usepackage{svg}
\usepackage{tikz}
\usepackage{amsmath}
\usepackage{amssymb}
\usepackage{algorithmic}
\usepackage{mathabx}
\usepackage{hyperref}
\usepackage{lipsum}
\usepackage{booktabs}
\usepackage[most]{tcolorbox}
\hypersetup{
    colorlinks=true,
    linkcolor=black,
    urlcolor=blue,
    pdftitle={},
    pdfpagemode=FullScreen,
}
\usepackage{diagbox}
\usepackage{placeins}

\usetikzlibrary{shapes.geometric, arrows}
\usetikzlibrary{calc,fit,arrows}

\newtheorem{theorem}{Theorem}[section]

\newtheorem{assumption}[theorem]{Assumption}

\newtheorem{proposition}[theorem]{Proposition}
\newtheorem{definition}[theorem]{Definition}
\newtheorem{lemma}[theorem]{Lemma}
\newtheorem{remark}[theorem]{Remark}

\newtheorem{example}{Example}

\newcommand{\R}{{\mathbb{R}}}
\newcommand{\N}{{\mathbb{N}}}

\newcommand{\dom}{\textrm{dom}}

\newcommand{\cl}{\textrm{cl}}
\newcommand{\col}{\operatorname{col}}

\newcommand{\intt}{\textrm{int}}

\title{\LARGE \bf
Invariance is Compositional for Continuous-time Systems: From Sleekness to Lebesgue Density}

\author{Othman Cherkaoui Dekkaki, Sadek Belamfedel Alaoui, Olayo Reynaud,\\ Mohamed Maghenem, Alessio Iovine and Adnane Saoud
\thanks{Othman Cherkaoui Dekkaki, Sadek Belamfedel Alaoui and Adnane Saoud are with College of Computing, University Mohammed VI Polytechnic (UM6P), Benguerir, Morocco. Olayo Reynaud and Mohamed Maghenem are with Univ.\ Grenoble Alpes, CNRS, Grenoble INP, GIPSA-lab, 38000 Grenoble, France. Alessio Iovine is with the Laboratory of Signals and Systems (L2S), CNRS, CentraleSupélec, Paris-Saclay University, Gif-sur-Yvette, France.}
\thanks{*This work was supported by the ENSUS Project under the College of Computing, Mohammed VI Polytechnic University. It has received funding from the Agence Nationale de la Recherche (ANR) via grant ESTHER ANR-22-CE05-0016.}
}

\begin{document}

\maketitle
\thispagestyle{empty}
\pagestyle{empty}


\begin{abstract}
 This work establishes the first bidirectional compositional invariance result: robust forward invariance of a global Cartesian product set is shown to be equivalent to robust forward invariance of each local subsystem under coupling inputs from its neighbors. To facilitate this, we introduce the notion of tangential Lebesgue-density, a new condition that is weaker than classical sleekness but sufficient to ensure that the product of individual tangent cones equals the tangent cone of the product set. This equivalence reduces the curse of dimensionality by allowing the verification of a high-dimensional global system to be decomposed into a series of local sub-checks. The framework's scalability is demonstrated through a DC microgrid numerical example, confirming that the verification complexity grows only linearly with the number of subsystems.
\end{abstract}

\section{Introduction}

Safety of a dynamical system is the property that the state trajectories remain within a prescribed set of admissible states for all time. This property is formalized through the notion of forward invariance: a set $K$ is forward invariant for a given dynamics if every trajectory starting in $K$ remains in $K$ for all future time~\cite{blanchini2008set,aubin2009viability}. For a single continuous-time system described by a differential inclusion, a closed set $K$ is forward invariant if and only if the velocity field points into $K$ at every boundary point; that is, it belongs to the contingent cone to $K$ at each point of $\partial K$~\cite{aubin2009viability}. While this characterization is well understood for monolithic systems, verifying it for 
large-scale interconnected systems is computationally intractable in general. Indeed, the boundary of the global safe set lives in a space whose dimension grows with the number of subsystems, and the tangent cone condition must be checked at every point of this boundary, making the condition infeasible at scale.
 
To overcome the curse of dimensionality inherent to centralized approaches,
a natural strategy is to decompose the verification problem into smaller
subproblems, one per subsystem, and to derive global invariance guarantees
from local ones. A prominent framework for this is that of assume-guarantee
contracts, in which, each subsystem is assigned a contract specifying the
property it must fulfill under some assumptions on the behavior of its
environment. Compositionality results then establish that if all subsystems
satisfy their local contracts, a global contract for the interconnected
system is satisfied. This framework has been developed for discrete-time \cite{Ghasemi2020HSCC}, continuous-time \cite{SaoudAutomatica21} and hybrid systems \cite{Alaoui2024}, covering a broad range of specifications. A general assume-guarantee framework to achieve forward invariance, covering both discrete and continuous-time settings
as well as sampled-data systems, is developed
in~\cite{SaoudECC18,SaoudAutomatica21,SaoudTAC21}. This framework was
extended to signal temporal logic specifications for continuous-time
nonlinear interconnected systems in~\cite{Liu2025TAC} and to behavioral contracts for continuous-time linear dynamical systems in~\cite{Shali2021ECC}. In the context of dissipativity theory, a compositional approach to safety verification for interconnected continuous-time systems was proposed in~\cite{Coogan2015}. 
 
A common feature of all the works presented above is that the compositional reasoning establishes invariance only in the sufficient direction: if each subsystem satisfies its local contract, or maintains its trajectories within a local set, then the global system is safe. The reverse implication, that global invariance of the interconnected system necessitates local invariance of each subsystem, is not addressed in any of these works. To the best of our knowledge, no existing work in the literature establishes an if-and-only-if compositional invariance result for continuous-time interconnected systems.

The present paper addresses this gap and makes two main contributions. The first and main contribution is Theorem~\ref{thm:GBL}, which establishes the first if-and-only-if compositional invariance result for continuous-time interconnected systems: Robust forward invariance of the global Cartesian product set for the interconnected system is equivalent to robust forward invariance of each local set for the corresponding subsystem. This equivalence reduces the verification of a global invariance property, which requires checking a tangent cone condition at every point of the boundary of the global safe set, to a collection of local conditions, each involving only the dynamics of a single subsystem and the geometry of a single local set. The second contribution is Proposition~\ref{lem:product_cone}, which relaxes the classical sleekness assumption required for the equality between the contingent cone to the Cartesian-product set and the product of contingent cones to the individual sets \cite{aubin2009viability,RockafellarWets1998}. We introduce tangential Lebesgue-density, a condition strictly weaker than sleekness. 
It requires that, at every boundary point, the set of admissible times associated with near-tangent directions has Lebesgue density one. By ensuring that valid times occupy almost the entire interval near zero, this property provides the necessary density to synchronize multiple local sequences into a single global one. We demonstrate that sleekness implies tangential Lebesgue-density, whereas the converse is generally false. Furthermore, we prove that tangential Lebesgue-density is sufficient to establish the property that the Cartesian product of local tangent cones is contained within the tangent cone of the global product set. This result provides the theoretical foundation for the bidirectional compositional invariance, allowing global safety to be equivalent to the robust forward invariance of individual subsystems.
 
In the remainder of this paper, Section~\ref{sec:prelim} introduces the interconnection framework, robust forward invariance, and the considered tangent cones.
Section~\ref{sec:cones} studies the Cartesian product of tangent cones, introduces the tangential Lebesgue-density condition, and establishes its relationship to classical sleekness condition. Section~\ref{sec:comp} presents the main compositional invariance result. Section~\ref{sec:numerical} illustrates the theoretical results on a DC microgrid example with $100$ distributed generation units.

\paragraph*{\textbf{Notation}}
We denote by $\R$ the set of real numbers and by $\N$ the set of nonnegative integers, with standard restrictions (e.g., $\R_{\ge0}$, $\R_{>0}$, $\N_{>0}$). $\|\cdot\|$ denotes the Euclidean norm. For a set $K\subset\R^n$, $\cl(K)$ and $\intt(K)$ denote its closure and interior. For $\eta>0$, $\mathbb{B}(0,\eta)$ denotes the ball centered at the origin of radius $\eta$ w.r.t to the Euclidean norm. 
Given sets $S_1,\ldots,S_N$, their Cartesian product is $\prod_{i=1}^N S_{i}$.
For vectors $v_1,\ldots,v_N$, the column concatenation is $\col(v_1,\ldots,v_N):=[v_1^\top\ \cdots\ v_N^\top]^\top.$ 
For a set-valued map $G:\mathbb{R}^{q}\rightrightarrows
\mathbb{R}^{n}$, its domain is
$\dom(G):=\{z\in\mathbb{R}^{q}\mid G(z)\neq\emptyset\}$. For a single-valued map $h$, $\dom(h)$ denotes the set on which
$h$ is defined. In particular, for a signal
$x:I\to\mathbb{R}^{n}$, where $I\subseteq\mathbb{R}_{\geq 0}$,
we have $\dom(x)=I$. Given a set $S \subseteq \mathbb{R}$, $\lambda(S)$ denotes its Lebesgue measure on $\mathbb{R}$.

\section{Preliminaries}\label{sec:prelim}

\subsection{Interconnected subsystems}

Consider a collection of $N \in \mathbb{N}_{>0}$ subsystems $\Sigma_i$, $i \in I:=\{1,2,\ldots,N\}$. Each subsystem $\Sigma_i$ is a tuple 
\begin{equation}
\label{def:agent}
  \Sigma_i:=(X_i,W_i^{1},W_i^{2},Y_i,f_i,h_i),  
\end{equation}
where
\begin{itemize}
    \item $X_i\subset\R^{n_i}$, $W_i^{1}\subset\R^{m_i^{1}}$, $W_i^{2}\subset\R^{m_i^{2}}$ and $Y_i\subset\R^{p_i}$ are the state, external input, internal input and output spaces, respectively.
    
\item $f_i:\R^{n_i}\times\R^{m_i^{1}}\times\R^{m_i^{2}}\rightrightarrows\R^{n_i}$ represents the dynamics of the subsystem and $h_i:\R^{n_i}\to\R^{p_i}$ the output map.
\end{itemize}
A trajectory of subsystem $\Sigma_i$ is a quadruple $(w_i^{1},w_i^{2},x_i,y_i):\dom(x_i)\rightarrow W^1_i \times W^2_i \times X_i \times Y_i$, where $\dom(x_i)\in \{[0,T]\mid T\in\R_{\ge0}\}\ \cup\ \{[0,T)\mid T\in\R_{>0}\}\ \cup\ \R_{\ge0}$, $w^1_i$ and $w^2_i$ are locally measurable, $x_i$ and $y_i$ are absolutely continuous and satisfy for almost all $t \in \dom(x_i)$
\[
\begin{cases}
\dot{x}_i(t) \in f_i(x_i(t), w_i^1(t), w_i^2(t)), \quad  x_i(0) \in X_i, \\
y_i(t) = h_i(x_i(t)).
\end{cases}
\]
We use $\phi_i(t,x^0_i,w_i^1,w_i^2)$ to denote the state of the system $\Sigma_i$ reached at time $t\geq 0$, from the initial condition $x_i(0) \in X_i$, under the internal input $w^1_i:[0,t] \rightarrow W_i^1$ and external input $w^2_i:[0,t] \rightarrow W_i^2$.

\begin{remark}
    The tuple $\Sigma_i$ distinguishes between two types of inputs. The external input $w_i^1 \in W_i^1$ represents exogenous disturbances (outside the network) acting on subsystem $\Sigma_i$, such as environmental perturbations or unmodeled  dynamics. The internal input $w_i^2 \in W_i^2$, by contrast, captures the coupling between $\Sigma_i$ and the rest of the network: It carries the outputs of the neighboring subsystems, and is determined  endogenously once the interconnection structure is fixed. The interconnection structure and the role of internal  inputs in the invariance analysis are made explicit in Definition~\ref{def:mas} and Theorem~\ref{thm:GBL}.
\end{remark}

In the remainder of the paper, we make the following regularity assumption.

\begin{assumption}\label{ass:SA}
Consider the subsystem $\Sigma_i$, $i \in I$, defined above. The set-valued map $f_i$ is Lipschitz\footnote{
the set-valued map $f_i$ is locally Lipschitz if for every $z\in\intt(\dom(f_i))$ there exist a neighborhood $U$ of $z$
and $L\ge0$ such that $f_i(z_1)\subseteq f_i(z_2)+L\|z_1-z_2\|\mathbb{B}(0,1)$ for all $z_1,z_2\in U\cap\dom(f_i)$.
It is Lipschitz if $L$ can be chosen independent of $z\in\intt(\dom(G))$.
$f_i$ has (nonempty) compact values if $f_i(z)$ is compact for all $z\in\dom(f_i)$.}
has (nonempty) compact values, and $
X_i\times W_i^{1}\times W_i^{2}\subset \intt\bigl(\dom(f_i)\bigr)$. The map $h_i$ is continuous, $X_i\subset \intt\bigl(\dom(h_i)\bigr)$, and $h_i(X_i)\subset Y_i$.
\end{assumption}

\begin{remark}
Assumption~\ref{ass:SA} consists of standard regularity conditions that are classical in the theory of differential inclusions 
\cite{aubin2009viability}. The Lipschitz property and compactness  of the values of $f_i$ together guarantee existence of solutions,  and are the minimal conditions under which the  tangent-cone-based invariance characterization holds \cite{aubin2009viability}.
\end{remark}

A network of subsystems consists of a collection of subsystems $\Sigma_i$, $i\in I=\{1,2,\ldots,N\}$, and a binary connectivity relation $\mathcal{I} \subseteq I \times I$. For $i \in I$, we define $\mathcal{N}(i):=\{j \in I \mid (j,i)\in \mathcal{I}\}$ as the set of neighbouring subsystems from which the incoming edges originate. We define the linear map $\pi_i : \mathbb{R}^n \to \mathbb{R}^{n_i}$, $i \in I$, as the canonical projection, given by $\pi_i(x) := x_i$ for $x = (x_1,\ldots,x_N)$. By slight abuse of notation, we extend $\pi_i$ to subsets $S \subseteq \mathbb{R}^n$ by defining $
\pi_i(S) := \{\, x_i \in \mathbb{R}^{n_i} \mid x \in S \,\}.$

\begin{definition}\label{def:mas}
Consider a collection of subsystems $\Sigma_i$, $i\in I=\{1,2,\ldots,N\}$ as defined in (\ref{def:agent}) and a binary connectivity relation $\mathcal{I} \subseteq I \times I$. We say that $(\Sigma_i)_{i \in I}$ is compatible for composition with respect to $\mathcal{I}$ if for each $i \in I$, we have 
$\prod_{j \in \mathcal{N}(i)} Y_j \subseteq W^2_i$, 
i.e., the internal input space of $\Sigma_i$ is a superset of the Cartesian product of the output spaces of all the neighbors in $\mathcal{N}(i)$. The interconnected system $\left\langle \left(  \Sigma_i \right)_{i \in I}, \mathcal{I} \right\rangle$ is denoted by $\Sigma$ and given by the tuple $\Sigma= (X,W^{1},Y,F,H)$, where
\begin{itemize}
    \item $X:=\prod_{i\in I}X_i$,  $W^{1}:=\prod_{i\in I}W_i^{1}$ and $Y:=\prod_{i\in I}Y_i$ are the sets of states, external inputs and outputs, respectively,
    \item $F:X\times W^{1}\rightrightarrows\R^{n}$, $n:=\sum_{i\in I}n_i$ is the dynamics of the system and $H:X\to Y$ is the output map, defined for $x=\col(x_i)_{i\in I}\in X$ and $w^{1}=\col(w_i^{1})_{i\in I}\in W^{1}$ as
    \begin{align*}
    F(x,w^{1})
    :=\Bigl\{&\col(v_i)_{i\in I}\ \Big|\ 
    v_i\in f_i\bigl(x_i,w_i^{1},w_i^2), \\ &w_i^2=\col(h_j(x_j))_{j\in \mathcal{N}(i)}\in W_i^2, \ \forall i \in I\Bigl\}
    \end{align*} 
and $H(x):=\col\bigl(h_i(x_i)\bigr)_{i\in I}$.
\end{itemize}
A trajectory of system $\Sigma$ is a tuple $(w^{1},x,y):\dom(x)\rightarrow W^1 \times X \times Y$, where $w^1$ is locally measurable, $x$ and $y$ are absolutely continuous and satisfy for almost all $t \in \dom(x)$
\[
\begin{cases}
\dot{x}(t) \in F(x(t), w^1(t)), \; x(0) \in X, \\
y(t) = H(x(t)).
\end{cases}
\]
\end{definition}

We have the following auxiliary result, showing that Assumption \ref{ass:SA} is directly transferable from subsystems to the interconnected system.

\begin{lemma}\label{lem:stacked-reg}
Consider a network of subsystems, $\Sigma_i$, $i \in I$, compatible for composition with respect to $\mathcal{I}$. If each subsystem $\Sigma_i$ satisfies Assumption \ref{ass:SA}, then the interconnected system $\Sigma=\left\langle \left(  \Sigma_i \right)_{i \in I}, \mathcal{I} \right\rangle$ satisfies Assumption \ref{ass:SA}.
\end{lemma}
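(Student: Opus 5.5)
The proof verifies each item of Assumption~\ref{ass:SA} for $\Sigma=(X,W^1,Y,F,H)$, with $F$ viewed as a set-valued map of $(x,w^1)$. The interconnected map factors as a composition: define $\Psi:\R^n\times\R^{m^1}\to\prod_i(\R^{n_i}\times\R^{m_i^1}\times\R^{m_i^2})$ by
\[
\Psi(x,w^1):=\bigl(x_i,w_i^1,\col(h_j(x_j))_{j\in\mathcal{N}(i)}\bigr)_{i\in I},
\]
so that $F(x,w^1)=\prod_{i\in I} f_i(\Psi_i(x,w^1))$, with the constraint $w_i^2\in W_i^2$ built into the definition of $F$. The steps below treat domain, compact values, Lipschitz continuity, and the output map in that order.

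\emph{Domain and compact values.} Take $(x,w^1)\in X\times W^1$. Since $h_j(X_j)\subset Y_j$, compatibility gives $w_i^2:=\col(h_j(x_j))_{j\in\mathcal{N}(i)}\in\prod_{j\in\mathcal{N}(i)}Y_j\subseteq W_i^2$. Hence $(x_i,w_i^1,w_i^2)\in X_i\times W_i^1\times W_i^2\subset\intt(\dom(f_i))$, so each $f_i(\cdot)$ is nonempty and compact. A finite product of nonempty compact sets is nonempty and compact, so $F(x,w^1)$ is nonempty and compact.

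For interiority, note that $\Psi$ is continuous, because each $h_j$ is continuous on the open set $\intt(\dom(h_j))\supset X_j$. The preimage under $\Psi$ of the open set $\prod_i\intt(\dom(f_i))\cap\prod_j\intt(\dom(h_j))$ is therefore an open neighborhood of $X\times W^1$. A subtlety arises here: on that neighborhood $F$ is defined through $w_i^2\in W_i^2$, which need not be an open condition. I will therefore interpret $F$ off $X\times W^1$ through the unconstrained formula $\prod_i f_i(\Psi_i(\cdot))$, whose domain contains that open neighborhood. This gives $X\times W^1\subset\intt(\dom F)$.

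\emph{Lipschitz property.} This is the main obstacle. With $z=(x,w^1)$ and $z'$, each $f_i$ is Lipschitz with constant $L_i$, so
\[
f_i(\Psi_i(z))\subseteq f_i(\Psi_i(z'))+L_i\|\Psi_i(z)-\Psi_i(z')\|\mathbb{B}(0,1).
\]
Taking products, using $\prod_i(A_i+r_i\mathbb{B})\subseteq\prod_iA_i+\bigl(\sum_i r_i^2\bigr)^{1/2}\mathbb{B}$, reduces the task to a bound $\|\Psi(z)-\Psi(z')\|\le L_\Psi\|z-z'\|$. This requires the $h_j$ to be Lipschitz, and mere continuity does not suffice. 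My plan is therefore to use local Lipschitzness of the $h_j$ as implicitly needed, or to interpret the claim in the local sense. If only continuity is available, the honest conclusion is that $F$ inherits the Lipschitz property in its $(x_i,w_i^1)$-arguments, with constant $\max_i L_i$, and is upper semicontinuous and continuous in the coupling. I would flag this as the point needing an extra hypothesis, such as $h_i$ locally Lipschitz, and state the resulting constant as $L=\max_i L_i\cdot\max(1,L_h)\cdot\sqrt{N}$ or similar.

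\emph{Output map.} $H=\col(h_i)$ is continuous as a stack of continuous maps. Its domain is $\prod_i\dom(h_i)$, whose interior contains $\prod_iX_i=X$. Finally, $H(X)\subseteq\prod_i h_i(X_i)\subseteq\prod_iY_i=Y$.
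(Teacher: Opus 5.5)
Your item-by-item outline matches the paper's proof: domain via compatibility and $h_j(X_j)\subseteq Y_j$, compact values via finite products, and $H$ continuous as a stack. The obstruction you flag at the Lipschitz step is real. It is a gap in the lemma and in the paper's proof, not in your proposal.

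\textbf{The Lipschitz step.} The paper handles this step in one sentence: because $f_i$ is Lipschitz and $h_i$ is \emph{continuous}, the map $x\mapsto f_i(x_i,w_i^1,\col(h_j(x_j))_{j\in\mathcal{N}(i)})$ is Lipschitz. That inference is false. Take $N=2$, $\mathcal{N}(1)=\{2\}$, $\mathcal{N}(2)=\{1\}$, all spaces equal to $\R$, $f_i(x_i,w_i^1,w_i^2):=\{w_i^2\}$ and $h_i(x_i):=\sqrt{|x_i|}$. Every hypothesis of the lemma holds, yet $F(x,w^1)=\{(\sqrt{|x_2|},\sqrt{|x_1|})\}$ is not even locally Lipschitz at $x=0$.

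The omission is not cosmetic. Take $K_1=K_2=\{0\}$, which is closed and convex, hence tangentially Lebesgue-dense. Then $\mathcal{W}_i^2=h_j(K_j)=\{0\}$ and $F(0,w^1)=\{0\}$, so conditions (b), (c) and (d) of Theorem~\ref{thm:GBL} all hold. However, $x_1(t)=x_2(t)=t^2/4$ solves the interconnection from the origin and leaves $K$, so (a) fails. Step 1 of that proof relies on this lemma at exactly this point.

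\textbf{The extra hypothesis and the rest of your argument.} The hypothesis you propose is therefore needed. To get ``Lipschitz'' in the footnote's sense, with a constant independent of the base point, you need each $h_j$ Lipschitz with a uniform constant $L_h$. Local Lipschitzness of $h_j$ only yields $F$ locally Lipschitz. With that hypothesis, your argument through $\Psi$ and $\prod_i(A_i+r_i\mathbb{B}(0,1))\subseteq\prod_i A_i+(\sum_i r_i^2)^{1/2}\mathbb{B}(0,1)$ is correct. Your constant $\sqrt{N}\max_i L_i\max(1,L_h)$ is right when there are no self-loops; a self-loop costs an extra factor $\sqrt{2}$. You should add one remark: continuity of $\Psi$ maps a small neighborhood of each point into the neighborhood on which the footnote's local estimate for $f_i$ holds.

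\textbf{The domain.} Your treatment here is more careful than the paper's. As declared in Definition~\ref{def:mas}, $F$ is defined only on $X\times W^1$, with the constraint $w_i^2\in W_i^2$. So $X\times W^1\subset\intt(\dom(F))$ can only hold after an extension such as your unconstrained formula, which the paper leaves implicit. One notational repair: describe the open neighborhood as the set of $(x,w^1)$ with $x_j\in\intt(\dom(h_j))$ for all $j$ and $\Psi(x,w^1)\in\prod_i\intt(\dom(f_i))$. As written, you take a preimage of an intersection of sets that live in different spaces.
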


\begin{proof}
We verify each condition of Assumption~\ref{ass:SA} for the interconnected system $\Sigma$. For each $i \in I$, since $f_i$ is Lipschitz and has compact values, and since $h_i$ is continuous, the composed map $x \mapsto f_i(x_i, w_i^1, \col(h_j(x_j))_{j \in \mathcal{N}(i)})$ 
is Lipschitz in $x$. Since $F(x,w^1) = \col(f_i(\cdot))_{i \in I}$ 
is a stacking of Lipschitz set-valued maps, $F$ is a Lipschitz set-valued map. Moreover, compact values of $F$ follow from the fact that finite Cartesian products of compact sets are compact.  Regarding the domain condition, since $X_i \times W_i^1 \times W_i^2 \subset \intt(\dom(f_i))$  for each $i \in I$, and since the interconnection sets satisfy  $\prod_{j \in \mathcal{N}(i)} h_j(X_j) \subseteq W_i^2$, it follows that $X \times W^1 \subset \intt(\dom(F))$. Finally, the map $H(x) = \col(h_i(x_i))_{i \in I}$ is continuous as a finite composition of continuous maps, $X \subset \intt(\dom(H))$, and $H(X) \subseteq Y$ since $h_i(X_i) \subseteq Y_i$ for each $i \in I$.
\end{proof}

\subsection{Robust forward invariance}

We now introduce the notion of robust forward-invariance. To do so, given $i \in I$, we consider $\Sigma_i=(X_i,W_i^{1},W_i^{2},Y_i,f_i,h_i)$ defined in (\ref{def:agent}).

\begin{definition}[Robust forward invariance]\label{def:FI}
A set $K_i\subseteq X_i$ is robust forward invariant for $\Sigma_i$ under ($\mathcal{W}^1_i, \mathcal{W}^2_i)  \subseteq W^1_i \times W^2_i$ if every solution  $\phi_i(.,x^0_i,{w}^1_i,{w}^2_i):\dom(\phi_i) \rightarrow X_i$ satisfies $\phi_i(t,x^0_i,{w}^1_i,{w}^2_i) \in K_i$ for all $ t \in \dom(\phi_i)$, for every initial condition $x^0_i \in K_i$, and for all locally measurable external and internal input signals ${w}^1_i:\dom(\phi_i) \rightarrow \mathcal{W}_i^1$ and ${w}^2_i:\dom(\phi_i) \rightarrow \mathcal{W}_i^2$.
\end{definition}

\begin{remark} 
A similar definition holds for the interconnected system  $\Sigma$ in Definition~\ref{def:mas}. That is, a set $K \subseteq X$ is robustly forward invariant for $\Sigma$ under $\mathcal{W}^1 \subseteq W^1$ if every solution $\phi(t, x^0, w^1)$ satisfies $\phi(t, x^0, w^1) \in K$ for all $t \in \dom(\phi)$, for every initial condition $x^0 \in K$, and for all locally measurable external input  signals $w^1 : \dom(\phi) \to \mathcal{W}^1$. Unlike Definition~\ref{def:FI}, no internal input constraint set appears.
\end{remark}

We also recall below the notions of contingent cone, Clarke tangent cone, and sleek sets, which play a central role in our study.

\begin{definition}\label{def:TK}
The contingent cone to the set $K \subset \mathbb{R}^n$ at $x \in \text{cl}(K)$, denoted by $T_K(x)$, is given by
  \[ T_K(x) := \left\{ v \in \mathbb{R}^n : \exists\, t^k \to 0^+,~ 
\exists\, w^k \rightarrow v, ~ x + t^k w^k \in K \right\}.\]
  
The Clarke tangent cone to the set $K \subset \mathbb{R}^n$ at $x \in \text{cl}(K)$, denoted by $C_K(x)$, is given by
  \[
  \begin{split}
 \hspace{-0.2cm} &  C_K(x) := 
  \\ 
   \hspace{-0.2cm}
  &
  \left\{ v \in \mathbb{R}^n : \forall t^k \to 0^+,~\forall x^k {\rightarrow}_K x, ~ \exists w^k \rightarrow v, ~ x^k + t^k w^k \in K \right\},
  \end{split}
  \]
  where $x^k {\rightarrow}_K x$ means that the sequence $\{x^k\}_{k} \subset K$ and $x^k {\rightarrow} x$.
\end{definition}
Note that the map $x \mapsto C_K(x)$ is lower semicontinuous, and $C_K(x)$ is a closed and convex subset of $T_K(x)$.

\begin{definition}
    A closed set $K  \subset \mathbb{R}^n$ is sleek at $x \in K$ if $C_{K}(x) = T_{K}(x)$, or equivalently, the map $z \mapsto T_{K}(z)$ is lower semicontinuous at $x$ \cite{aubin2009viability}. 
    The set $K$ is said to be sleek, if it is sleek at every $x \in K$.
\end{definition}

\section{Cartesian product of tangent cones}\label{sec:cones}

For a Cartesian product set $K = \prod_{i \in I} K_i$, the inclusion 
\begin{equation}
\label{eq:cone_inclusion_trivial}
   T_K(x) \;\subseteq\; \prod_{i \in I} T_{K_i}(x_i), 
\end{equation}
holds for any closed sets $K_i$ and any $x \in \partial K$.
Indeed, if $x + t^k v^k \in K = \prod_i K_i$, then
$x_i + t^k v_i^k \in K_i$ for each $i \in I$, so any tangent direction to $K$ at $x$ projects onto a tangent direction to $K_i$ at $x_i$.
The converse inclusion, 
\begin{equation}
    \prod_{i \in I} T_{K_i}(x_i) \;\subseteq\; T_K(x),
\end{equation}
is more delicate since it requires synchronizing, for each
$v = \col(v_i)_{i \in I} \in \prod_i T_{K_i}(x_i)$, the $N$ approximating sequences provided by the local tangent-cone definition into a single common sequence. Classically, this is achieved under  sleekness of each $K_i$~\cite[Prop. 6.41]{RockafellarWets1998}. In the following proposition, we show that sleekness can be replaced by a strictly weaker, pointwise condition that we introduce as tangential Lebesgue-density. Unlike sleekness, which constrains the geometry of the set $K_i$ in an entire neighborhood of $x_i$, this condition is local to each fixed point $x_i \in \partial K_i$: it only requires that the set of valid approximation times starting from $x_i$ is dense near $0$ in the sense of Lebesgue measure, placing no restrictions on the behavior of $K_i$ near neighboring points.

\begin{definition}[Tangential Lebesgue-density]
\label{def:TLD}
Let $K \subseteq \mathbb{R}^{n}$ be a closed set and let 
$x \in \partial K$. For $v \in T_K(x)$ and $\varepsilon > 0$, 
define the valid time set
\begin{equation}\label{eq:Sdef}
\begin{aligned}
    S_K(x,v,\varepsilon) \;:=\;
    \bigl\{t > 0 :&\; \exists\, w \in \mathbb{R}^{n},\;
    \|w - v\|<\varepsilon, 
    \\
    & \; x + tw \in K \bigr\}.
\end{aligned}
\end{equation}
The set $K$ is called tangentially Lebesgue-dense at $x$ if
\begin{equation}\label{eq:FDC}
    \lim_{r \to 0^+}
\frac{\lambda\bigl(S_K(x,v,\varepsilon)\cap(0,r)\bigr)}{r}
    = 1, ~~ \forall\, v \in T_{K}(x),\; \forall\, \varepsilon>0,
\end{equation}
where $\lambda$ denotes the Lebesgue measure on $\mathbb{R}$.
\end{definition}

Intuitively, the set $S_K(x,v,\varepsilon)$ collects all times $t > 0$ for which there exists a vector $w$ within distance $\varepsilon$ of $v$ such that $x + tw \in K$. For small $\varepsilon$, the vector $w$ is close to $v$ both in magnitude and orientation, so $S_K(x,v,\varepsilon)$ captures the times at which $x$ can be displaced by approximately $tv$ while staying in $K$. The ratio $\lambda(S_K(x,v,\varepsilon)\cap(0,r))/r$ measures the proportion of the time window $(0,r)$ that consists of valid approximation times. Condition~\eqref{eq:FDC} requires this proportion to tend to $1$ as $r \to 0^+$: for small $r$, almost the entire window $(0,r)$ consists of valid times, and only a negligible fraction, in the sense of length, fails this property. This is strictly stronger than the mere existence of valid times arbitrarily close to $0$, which is all that $v \in T_K(x)$ guarantees, but strictly weaker than the classical sleekness property, as established in Proposition~\ref{prop:sleek_implies_FDC}.

\begin{proposition}
    \label{lem:product_cone}
Let $K_i \subseteq \mathbb{R}^{n_i}$, $i \in I = \{1,\ldots,N\}$, be closed sets and define $K := \prod_{i \in I} K_i \subseteq \mathbb{R}^n$. Suppose that for all $i \in I$, the set $K_i$ is tangentially Lebesgue-dense at every $x_i \in \partial K_i$ in the sense of Definition~\ref{def:TLD}. Then, for all $x \in \partial K$,
\[
    \prod_{i \in I} T_{K_i}(x_i) \;\subseteq\; T_K(x).
\]
\end{proposition}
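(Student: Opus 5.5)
Fix $x=\col(x_i)_{i\in I}\in\partial K$ and $v=\col(v_i)_{i\in I}\in\prod_{i\in I}T_{K_i}(x_i)$. The goal is one sequence $t^k\to0^+$ and vectors $w^k\to v$ with $x+t^kw^k\in K$. The plan is to synchronize the $N$ local approximating sequences by a measure-counting argument on the valid time sets $S_{K_i}(x_i,v_i,\varepsilon)$ of Definition~\ref{def:TLD}.

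First, split the indices. Let $I_\partial:=\{i\in I: x_i\in\partial K_i\}$. Since each $K_i$ is closed and $x_i\in K_i$, every $i\notin I_\partial$ has $x_i\in\intt(K_i)$. For such $i$, any fixed $v_i$ satisfies $x_i+tv_i\in K_i$ for all sufficiently small $t>0$. Hence there is $r_i>0$ with $(0,r_i)\subseteq S_{K_i}(x_i,v_i,\varepsilon)$ for every $\varepsilon>0$, and these indices trivially satisfy a density-one condition. For $i\in I_\partial$, the hypothesis gives, for each $\varepsilon>0$,
\[
\lim_{r\to0^+}\frac{\lambda\bigl(S_{K_i}(x_i,v_i,\varepsilon)\cap(0,r)\bigr)}{r}=1 .
\]
So for every $i\in I$ the complement $(0,r)\setminus S_{K_i}(x_i,v_i,\varepsilon)$ has measure $o(r)$.

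Second, intersect. Fix $\varepsilon=1/k$. By subadditivity of $\lambda$,
\[
\lambda\Bigl((0,r)\setminus \bigcap_{i\in I}S_{K_i}(x_i,v_i,\tfrac1k)\Bigr)\le \sum_{i\in I}\lambda\bigl((0,r)\setminus S_{K_i}(x_i,v_i,\tfrac1k)\bigr),
\]
and the right-hand side is smaller than $r/2$ once $r$ is small enough, because $N$ is finite. Choose $r_k\le 1/k$ with this property. The intersection $\bigcap_{i}S_{K_i}(x_i,v_i,1/k)\cap(0,r_k)$ then has positive measure and is in particular nonempty, so pick $t^k$ in it. By the definition of $S_{K_i}$, for each $i$ there is $w_i^k$ with $\|w_i^k-v_i\|<1/k$ and $x_i+t^kw_i^k\in K_i$.

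Third, conclude. Set $w^k:=\col(w_i^k)_{i\in I}$. Then $\|w^k-v\|\le \sqrt N/k\to0$, $t^k\in(0,1/k)$ so $t^k\to0^+$, and $x+t^kw^k\in\prod_iK_i=K$. This gives $v\in T_K(x)$ by Definition~\ref{def:TK}.

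The argument is short, and the only delicate points are bookkeeping. One must handle the indices with $x_i\in\intt(K_i)$, where Definition~\ref{def:TLD} does not apply but the conclusion holds trivially. One must also note that density one is exactly what makes finitely many sets simultaneously nonempty near $0$: two sets of positive but small density need not intersect, which is why the mere existence of local sequences, i.e.\ $v_i\in T_{K_i}(x_i)$, is not enough. The main obstacle is therefore the intersection step, and it is resolved by the union bound on complements above.
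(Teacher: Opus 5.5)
Your proof is correct and follows essentially the same route as the paper: for each $k$, a union bound on the complements $(0,r)\setminus S_{K_i}(x_i,v_i,1/k)$ produces a common valid time $t^k\in(0,1/k)$, and the local witnesses $w_i^k$ are stacked into $w^k$. The paper uses a per-set threshold $1-1/N$ and argues by contradiction, while your $r/2$ bound gives positive measure directly, which is the same idea; your explicit handling of indices with $x_i\in\intt(K_i)$ (where density is not assumed but $(0,r_i)\subseteq S_{K_i}(x_i,v_i,\varepsilon)$ holds trivially) covers a case the paper's proof skips, since it applies the density condition at every $x_i$.
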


\begin{proof}
Let $v = \col(v_i)_{i \in I} \in \prod_{i \in I} T_{K_i}(x_i)$. Hence, $v_i \in T_{K_i}(x_i)$ for every $i \in I$. To show the result, we need to construct sequences $t^k \to 0^+$ and $w^k \to v$ with
$x + t^k w^k \in K$ for all $k \in \mathbb{N}$.

We first start by constructing the sequence of $t^k$. Consider $k \in \mathbb{N}$ and $i \in I$ and the set $S_{K_i}(x_i,v_i,1/k)$ defined according to Definition~\ref{def:TLD} with $\varepsilon = 1/k$. Since $S_{K_i}(x_i,v_i,1/k)\cap(0,r)\subseteq (0,r)$, the ratio
$\lambda(S_{K_i}(x_i,v_i,1/k)\cap(0,r))/r$ lies in $[0,1]$ for all $r > 0$, where $\lambda$ is the Lebesgue measure on $\mathbb{R}$.  Since the ratio lies in $[0,1]$ and
converges to $1$, one gets from (\ref{eq:FDC}) the existence of $r_i^k > 0$ depending on $i$ and $k$ such that for all
$r \in (0, r_i^k)$:
\begin{equation}\label{eq:FDC_applied}
    1 - \frac{1}{N}
    \;<\;
    \frac{\lambda\bigl(S_{K_i}(x_i,v_i,1/k)\cap(0,r)\bigr)}{r}
    \;\leq\; 1.
\end{equation}
where $N$ is cardinal of the set of subsystems $I$. Since $I$ is finite, define $\bar{r}^k := \min_{i \in I} r_i^k > 0$. Hence, using \eqref{eq:FDC_applied} one has that 
\begin{equation}
\label{eqn:Prop2}
    1 - \frac{1}{N}
    \;<\;
    \frac{\lambda\bigl(S_{K_i}(x_i,v_i,1/k)\cap(0,r)\bigr)}{r}
    \;\leq\; 1,
\end{equation}
all $i \in I$ and all $r \in (0,\bar{r}^k)$. Pick any $r^k \in (0, \min(\bar{r}^k, 1/k))$, and let us show that,
\begin{equation}\label{eq:intersection_nonempty}
    \bigcap_{i \in I} S_{K_i}\!\left(x_i,v_i,\frac{1}{k}\right)
    \cap\, (0, r^k) \;\neq\; \emptyset.
\end{equation}
Suppose for contradiction this intersection is empty. Hence, for all $t \in (0,r^k)$ there exists $i \in I$ such that $t \in (0,r^k) \setminus S_{K_i}(x_i,v_i,1/k)$. Hence, one gets that 
\begin{equation}
\label{eqn:Prop21}
    (0,r^k) \subseteq \bigcup_{i \in I}
((0,r^k)\setminus S_{K_i}(x_i,v_i,1/k)).
\end{equation}
By applying the subadditivity of $\lambda$ to (\ref{eqn:Prop21}), one gets for $r = r^k$ (since $r \in (0,\bar{r}^k)$ and $r^k< \bar{r}^k$) that
\begin{align*}
r^k &\leq \sum_{i \in I}
\lambda\bigl((0,r^k)\setminus S_{K_i}(x_i,v_i,1/k)\bigr)\\
&= \sum_{i \in I}\!\Bigl(r^k -
\lambda\bigl(S_{K_i}(x_i,v_i,1/k)\cap(0,r^k)\bigr)\Bigr)\\
&< \sum_{i \in I}\!\left(r^k
- \Bigl(1-\frac{1}{N}\Bigr)r^k\right)
= \sum_{i \in I}\frac{r^k}{N} = r^k,
\end{align*}
where the third inequality follows from (\ref{eqn:Prop2}), which contradits our assumption. Hence~\eqref{eq:intersection_nonempty} holds
and we choose the sequence $t^k$ satisfying $t^k \in \bigcap_{i \in I} S_{K_i}(x_i,v_i,1/k)
\cap (0,r^k)$ for all $k \in \mathbb{N}$.

We now construct the sequence $w^k$ as follows. Since $t^k \in S_{K_i}(x_i,v_i,1/k)$ for all $i \in I$ and for all $k \in \mathbb{N}$, one has from definition of the sets $S_{K_i}(x_i,v_i,1/k)$, $i \in I$, in \eqref{eq:Sdef} the existence of $w_i^k \in \mathbb{R}^{n_i}$ with
\begin{equation}\label{eq:local_approx}
    \|w_i^k - v_i\| < \frac{1}{k}
    \quad\text{and}\quad
    x_i + t^k w_i^k \in K_i.
\end{equation}
Hence, we define $w^k := \col(w_i^k)_{i \in I}$. 

Let us now show the convergence of the sequences $t_k$ and $w^k$ to $0^+$ and $v$, respectively. First, from the construction of the sequence $r^k \in (0, \min(\bar{r}^k, 1/k))$, one has $t^k < r^k < 1/k$, which implies that $t^k$ converges to $0^+$. Moreover, one has from (\ref{eq:local_approx}) that $\|w^k - v\|^2 = \sum_{i \in I} \|w_i^k - v_i\|^2 < N/k^2
\to 0$, which implies that $w^k$ converges to $v$. Hence, it follows that $x + t^k w^k = \col(x_i + t^k w_i^k) \in
\prod_{i \in I} K_i = K$ for all $k \in \mathbb{N}$. By Definition~\ref{def:TK}, $v \in T_K(x)$. Hence, we conclude that $\prod_{i\in I} T_{K_i}(x_i) \subseteq T_K(x)$.
\end{proof}

\begin{remark}
Condition $v_i \in T_{K_i}(x_i)$ guarantees only that  $S_{K_i}(x_i,v_i,\varepsilon)$ is non-empty when $\varepsilon$ is arbitrarily close to $0$, i.e., valid times exist near $0$ but could be isolated, leaving large gaps in between. The core difficulty in the proof is that each subsystem  provides its own sequence of valid times, and these $N$ sequences are generally different, so they cannot directly serve as a common time for the product. The proposed tangential Lebesgue-density condition in Definition \ref{def:TLD} resolves this by requiring that valid times fill almost the entire interval $(0,r)$ for small $r$, guaranteeing that a common valid time for all subsets must exist.
\end{remark}

The following result establishes the relationship between the tangential Lebesgue-density condition introduced in Definition~\ref{def:TLD} and the classical sleekness assumption. It shows that sleekness implies tangential Lebesgue-density, but that the converse fails in general, confirming that Definition~\ref{def:TLD} is a strictly weaker requirement than sleekness.

\begin{proposition}
\label{prop:sleek_implies_FDC}
Let $K \subseteq \mathbb{R}^{n}$ be a closed set and
$x \in \partial K$. If $K$ is 
   sleek at $x$, then $K$ is tangentially Lebesgue-dense at $x$ in the sense of Definition~\ref{def:TLD}.
\end{proposition}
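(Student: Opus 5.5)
The plan is to show something stronger than \eqref{eq:FDC}: under sleekness at $x$, for every $v \in T_K(x)$ and every $\varepsilon>0$ there is $r_0>0$ with $(0,r_0) \subseteq S_K(x,v,\varepsilon)$. Once this holds, $\lambda(S_K(x,v,\varepsilon)\cap(0,r)) = r$ for all $r \in (0,r_0)$, so the ratio in \eqref{eq:FDC} equals $1$ for every small $r$ and the limit is trivially $1$.

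The key tool is the definition of the Clarke tangent cone (Definition~\ref{def:TK}) combined with sleekness, i.e.\ $T_K(x) = C_K(x)$. Fix $v \in T_K(x) = C_K(x)$ and $\varepsilon>0$. Suppose the claim fails. Then for every $k$ there is $t^k \in (0,1/k)$ with $t^k \notin S_K(x,v,\varepsilon)$. This gives a sequence $t^k \to 0^+$ such that
\[
x + t^k w \notin K \quad \text{for all } w \text{ with } \|w-v\|<\varepsilon .
\]
Now apply the definition of $C_K(x)$ with the constant sequence $x^k := x$. This is legitimate, since $x \in K$ because $K$ is closed and $x \in \partial K$, so $x^k \to_K x$. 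The definition yields $w^k \to v$ with $x + t^k w^k \in K$. For large $k$ we have $\|w^k - v\|<\varepsilon$, so $t^k \in S_K(x,v,\varepsilon)$, which is a contradiction. Hence some $r_0>0$ satisfies $(0,r_0)\subseteq S_K(x,v,\varepsilon)$.

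One technical point should be noted. Lebesgue measurability of $S_K(x,v,\varepsilon)$ is implicitly assumed by Definition~\ref{def:TLD}. In the sleek case we do not need it in general, because on $(0,r_0)$ the set contains a full interval. If one wants measurability for general closed $K$, $S_K(x,v,\varepsilon)$ is the projection onto the $t$-coordinate of the open-in-$w$ condition. Concretely, it is $\{t>0 : \operatorname{dist}(x+tv, K) < t\varepsilon\}$, which is open since the distance function is continuous. It is therefore measurable. This also gives an alternative phrasing of the argument.

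There is no real obstacle here. The main step is recognizing that the Clarke cone's "for all sequences $t^k\to0^+$" quantifier, applied with the constant base-point sequence, already forces every small time to be valid. The only care needed is in the negation, which must produce a sequence of bad times tending to zero.
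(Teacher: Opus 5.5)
Your proposal is correct and matches the paper's proof essentially step for step. Both prove the stronger claim that $(0,\delta)\subseteq S_K(x,v,\varepsilon)$ for some $\delta>0$ by contradiction: you extract bad times in $(0,1/k)$ and apply the Clarke-cone definition with the constant base sequence $x^k=x$. Your extra remark that $S_K(x,v,\varepsilon)=\{t>0:\operatorname{dist}(x+tv,K)<t\varepsilon\}$ is open, hence measurable, is correct, and the paper leaves this point implicit.
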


\begin{proof}
Consider a closed and sleek set $K$, $x \in \partial K$, and pick any $v \in T_K(x) = C_K(x)$ and $\varepsilon > 0$. To show the result, we show the existence of $\delta>0$ such that $(0,\delta) \subseteq S_K(x,v,\varepsilon)$.
The definition of the Clarke tangent cone to the set $K$ at $x$ implies that for any sequence $z^k$ converging to $x$ such that $z^k \in K$ for all $k \in \mathbb{N}$, and for any time sequence $t^k$, $k \in \mathbb{N}$ converging to $0^+$, there exists a sequence $w^k$, $k\in \mathbb{N}$ converging to $v$ such that $z^k + t^k w^k \in K$ for all $k \in \mathbb{N}$.

To show the result, we actually prove the stronger statement that there exists $\delta > 0$ such that  $(0,\delta) \subseteq S_K(x,v,\varepsilon)$, which immediately implies~\eqref{eq:FDC} since $\lambda(S_K(x,v,\varepsilon)\cap(0,r)) = r$ for all $r \in (0,\delta)$. Suppose for contradiction that $S_K(x,v,\varepsilon)$  contains no interval $(0,\delta)$ for any $\delta > 0$. This means that for every $\delta > 0$, the interval $(0,\delta)$ is not entirely contained in $S_K(x,v,\varepsilon)$, i.e., there exists at least one time in $(0,\delta)$ that does not belong to $S_K(x,v,\varepsilon)$. Applying this with $\delta = 1/k$ for each $k \geq 1$, there exists $s^k \in (0, 1/k)$ such that $s^k \notin S_K(x,v,\varepsilon)$. Since $s^k \in (0,1/k)$ for every $k \geq 1$, we have $s^k \to 0^+$.
Applying the Clarke cone condition with the constant sequence $z^k := x$ (which satisfies $z^k \in K$ for all $k$ since $x \in \partial K \subseteq K$, as $K$ is closed) and the time sequence $t^k := s^k \to 0^+$, there exists a sequence $w^k \to v$ such that $x + s^k w^k \in K$ for all $k$. Since $w^k \to v$, there exists $K_0 \geq 1$ such that $\|w^k - v\| < \varepsilon$ for all $k \geq K_0$. Therefore, for all $k \geq K_0$, both $\|w^k - v\| < \varepsilon$ and $x + s^k w^k \in K$ hold simultaneously, which means by definition of $S_K(x,v,\varepsilon)$ in~\eqref{eq:Sdef} that $s^k \in S_K(x,v,\varepsilon)$ for all $k\geq K_0$, contradicting $s^k \notin S_K(x,v,\varepsilon)$.

Hence $(0,\delta) \subseteq S_K(x,v,\varepsilon)$ for some $\delta > 0$, which gives $\lambda(S_K(x,v,\varepsilon)\cap(0,r)) = r$ for all $r \in (0,\delta)$, and therefore $K$ is tangentially Lebesgue-dense at $x$.
\end{proof}
The converse of Proposition  \ref{prop:sleek_implies_FDC} does not generally hold. The following counterexample shows that there exists a closed set $K$ and a point $x$ on its boundary such that $K$ is tangentially Lebesgue-dense at $x$, yet $K$ is not sleek at $x$.
\begin{example}
 Consider the closed set
\[
    K := \bigl\{(x,y) \in \mathbb{R}^2 : y \geq 0\bigr\}
    \;\cup\;
    \bigl\{(0,y) : y \leq 0\bigr\}.
    \]

It is clear that $T_K(0,0) = \{(a,b): b \geq 0\} \cup \{(0,b): b \leq 0\}$. First, let us show that $K$ is tangentially Lebesgue-dense at $z=(0,0)$. Consider $v \in T_K(z)$ and $\varepsilon>0$, we have two cases,
\begin{itemize}
    \item If $v = (a,b)$ with $b \geq 0$, then by taking $w = v$ one gets  $\|w - v\| = 0 < \varepsilon$ and $(0,0) + tw = (ta,tb) \in K$ for all $t > 0$ since $tb \geq 0$. Hence, $\mathbb{R}_{>0} \subseteq S_K(z,v,\varepsilon)$.
\item If $v = (0,b)$ with $b < 0$, then by taking $w = v = (0,b)$, one gets  $\|w - v\| = 0 < \varepsilon$ and $(0,0) + tw = (0,tb) \in K$ for all $t > 0$ since $tb \leq 0$. Hence  $\mathbb{R}_{>0} \subseteq S_K(z,v,\varepsilon)$. 
\end{itemize}
Since $S_K(z,v,\varepsilon) \subseteq \mathbb{R}_{>0}$ and since in both cases we have $\mathbb{R}_{>0} \subseteq S_K(z,v,\varepsilon)$, it follows that $S_K(z,v,\varepsilon) = \mathbb{R}_{>0}$ and the ratio $\lambda(S_K(z,v,\varepsilon)\cap(0,r))/r = 1$ for all $r > 0$, which in turn implies that $K$ is tangentially Lebesgue-dense at $(0,0)$.

Let us now show that $K$ is not sleek at $(0,0)$. Since $C_K(0,0)$ is convex by definition and  $T_K(0,0) = \{(a,b):b\geq 0\}\cup\{(0,b):b\leq 0\}$ is nonconvex, as  $(1,0)\in T_K(0,0)$ and $(0,-1)\in T_K(0,0)$ yet their convex combination $(\tfrac{1}{2},-\tfrac{1}{2})\notin T_K(0,0)$, it follows that $C_K(0,0)\neq T_K(0,0)$, and therefore $K$ is not sleek at $(0,0)$.

\end{example}

\section{Compositional invariance}\label{sec:comp}

We now provide the main result of the paper, by showing that the invariance property is a compositional property.

\begin{theorem}
   \label{thm:GBL}
Consider a network of subsystems, $\Sigma_i$, $i \in I$, compatible for composition with respect to $\mathcal{I}$ and let $\Sigma=\left\langle \left(  \Sigma_i \right)_{i \in I}, \mathcal{I} \right\rangle$ be the interconnected system. For each $i \in I$, let $K_i \subseteq X_i$ and $\mathcal{W}^1_i \subseteq W^1_i$ be closed sets and define $K:=\ \prod_{i\in I} K_i$ and $\mathcal{W}^1:=\ \prod_{i\in I} \mathcal{W}^1_i$. If each subsystem $\Sigma_i$ satisfies Assumption~\ref{ass:SA} and if the sets $K_i$ are tangentially Lebesgue-dense at every $x_i \in \partial K_i$ in the sense of Definition~\ref{def:TLD}, for all $i \in I$, then the following properties are equivalent:
\begin{enumerate}
\renewcommand{\labelenumi}{(\alph{enumi})}
    \item The set $K\subseteq X$ is a robust forward invariant set for the interconnected system $\Sigma$ under the constraint set $\mathcal{W}^1$,
    \item For all $x\in\partial K$, $f(x,\mathcal{W}^1)\subseteq T_{K}(x)$,
    \item For all $i \in I$, the set $K_i\subseteq X_i$ is a robust forward invariant set for the subsystem $\Sigma_i$ under the constraint set ($\mathcal{W}^1_i, \mathcal{W}^2_i)$, where $\mathcal{W}^2_i=\prod_{j\in \mathcal{N}(i)} h_j(K_j)$, 
    \item For all $i\in I$ and for all $x_i\in\partial K_i$,  \[
f_i(x_i,\mathcal{W}^1_i,\mathcal{W}^2_i)\subseteq T_{K_i}(x_i),
\]
where $\mathcal{W}^2_i:=\prod_{j\in \mathcal{N}(i)} h_j(K_j)$.
\end{enumerate}
\end{theorem}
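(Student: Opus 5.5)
I will prove the equivalences through the chain (a)$\Leftrightarrow$(b)$\Leftrightarrow$(d)$\Leftrightarrow$(c). The two equivalences (a)$\Leftrightarrow$(b) and (c)$\Leftrightarrow$(d) are instances of the classical viability/invariance characterization for Lipschitz differential inclusions with compact values \cite{aubin2009viability}: a closed set is robustly forward invariant under a closed input constraint set if and only if every admissible velocity at every boundary point lies in the contingent cone. For (a)$\Leftrightarrow$(b) I invoke Lemma~\ref{lem:stacked-reg} to get Assumption~\ref{ass:SA} for $\Sigma$. To apply the classical result I treat the inputs as part of a parametrized inclusion $\dot x\in F(x,w^1)$, $w^1\in\mathcal{W}^1$, and work with the map $x\mapsto \bigcup_{w}F(x,w)$. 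Here I must check that this map is Lipschitz with compact values, which needs $\mathcal{W}^1$ compact or at least handled by a closure argument. For (c)$\Leftrightarrow$(d) I apply the same result to $\Sigma_i$, with inputs ranging over $\mathcal{W}^1_i\times\mathcal{W}^2_i$. If $\mathcal{W}^2_i$ is not closed I replace it by its closure, using continuity of $f_i$ and closedness of $T_{K_i}(x_i)$.

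The core step is (b)$\Leftrightarrow$(d), a pointwise statement about cones. By \eqref{eq:cone_inclusion_trivial} and Proposition~\ref{lem:product_cone}, tangential Lebesgue-density gives
\[
T_K(x)=\prod_{i\in I}T_{K_i}(x_i)\quad\text{for all } x\in\partial K .
\]
I would also note the convention $T_{K_i}(x_i)=\mathbb{R}^{n_i}$ for $x_i\in\intt K_i$. Since Proposition~\ref{lem:product_cone} is stated only for boundary points, I will check that its proof still works when some components are interior: in that case $S_{K_i}$ contains an interval $(0,\delta)$. Under this identity, the velocities in $F(x,w^1)$ are exactly the vectors $\col(v_i)$ with $v_i\in f_i(x_i,w^1_i,\col(h_j(x_j))_{j\in\mathcal{N}(i)})$, and $x\in K$ implies $\col(h_j(x_j))_{j\in\mathcal{N}(i)}\in\mathcal{W}^2_i$.

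For (d)$\Rightarrow$(b), take $x\in\partial K$. Components with $x_i\in\partial K_i$ satisfy $v_i\in T_{K_i}(x_i)$ by (d), and interior components are trivially tangent, so $F(x,\mathcal{W}^1)\subseteq\prod_i T_{K_i}(x_i)=T_K(x)$.

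For (b)$\Rightarrow$(d), fix $i$, a point $x_i\in\partial K_i$, inputs $w^1_i\in\mathcal{W}^1_i$ and $w^2_i\in\mathcal{W}^2_i$. Write $w^2_i=\col(h_j(\xi_j))_{j\in\mathcal{N}(i)}$ with $\xi_j\in K_j$. I then build a global point $x\in K$ with $x_i$ in the $i$-th slot and $x_j=\xi_j$ for $j\in\mathcal{N}(i)$, filling the remaining slots arbitrarily in $K_j$. Then $x\in\partial K$, because its $i$-th component lies on $\partial K_i$. Choosing any $w^1_j\in\mathcal{W}^1_j$ and any $v_j\in f_j(\cdot)$ (nonempty by Assumption~\ref{ass:SA}), (b) gives a vector in $T_K(x)\subseteq\prod_j T_{K_j}(x_j)$, whose $i$-th component shows $f_i(x_i,w^1_i,w^2_i)\subseteq T_{K_i}(x_i)$. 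Note that this direction uses only the trivial inclusion, not density.

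The main obstacle is a consistency issue in the realization step. If $i\in\mathcal{N}(i)$ (a self-loop), or if several neighbours share a component, then $\xi_i$ is forced to equal $x_i$ and arbitrary elements of $\prod_j h_j(K_j)$ may not be realizable. I would handle this by assuming, or arguing from the interconnection structure, that $\mathcal{N}(i)$ does not contain $i$. If self-loops must be allowed, I would redefine the relevant $w^2$ component to be $h_i(x_i)$. A secondary technical point is verifying the hypotheses of the classical invariance theorem with input constraints, namely closedness and compactness of the constraint sets and measurable selection of inputs. I would treat this by citing the robust version of the viability theorem in \cite{aubin2009viability}.
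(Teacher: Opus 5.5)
Your proposal is correct and follows the paper's proof step for step: the classical invariance theorem gives (a)$\Leftrightarrow$(b) and (c)$\Leftrightarrow$(d), Proposition~\ref{lem:product_cone} with $T_{K_i}(x_i)=\mathbb{R}^{n_i}$ at interior components gives (d)$\Rightarrow$(b), and the realization point built from $x_i$ and $\xi_j\in K_j$ together with the trivial inclusion \eqref{eq:cone_inclusion_trivial} gives (b)$\Rightarrow$(d). Your caveats are all sound and go beyond the paper: its proof tacitly uses density at interior components, asserts compactness of $\mathcal{W}^1$ and of the $K_j$ although only closedness is assumed, and its choice $z_i:=x_i$, $z_j:=\hat x_j$ silently requires $i\notin\mathcal{N}(i)$; this cannot be derived from Definition~\ref{def:mas} and must be added as a hypothesis, since with a self-loop (e.g.\ $N=1$, $f_1=\{-w_1^2\}$, $h_1(x_1)=x_1$, $K_1=[0,1]$) (a) and (b) hold while (d) fails at $x_1=0$, $w_1^2=1$.
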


\begin{proof}
To show the result we proceed in four steps.

\noindent\textbf{Step 1: (a) $\Leftrightarrow$ (b).}
Since each subsystem $\Sigma_i$, $i \in I$, satisfies Assumption \ref{ass:SA}, it follows from Lemma \ref{lem:stacked-reg} that the interconnected system $\Sigma$ satisfies Assumption \ref{ass:SA}. Moreover, one has from the compactness of the set $\mathcal{W}^1$ that the set-valued map $f(.,\mathcal{W}^1)$ is Lipschitz and has compact values. Finally, since $K \subseteq X \subseteq \dom(f(.,\mathcal{W}^1))$, it follows from Theorem 5.3.4 in \cite{aubin2009viability} that (a) $\Leftrightarrow$ (b).

\medskip
\noindent\textbf{Step 2: (c) $\Leftrightarrow$ (d).}
Consider the subsystem $\Sigma_i$, $i \in I$. Since $\Sigma_i$ satisfies Assumption \ref{ass:SA}, one has from the compactness of the sets $K_j$, $j \in \mathcal{N}(i)$, and the continuity of the map $h_i$, that the set $\mathcal{W}^2_i=\prod_{j\in \mathcal{N}(i)} h_j(K_j)$ is compact. Moreover, from Assumption \ref{ass:SA} and the compactness of the set $\mathcal{W}^1_i$ one gets that the set-valued map $f_i(.,\mathcal{W}^1_i,\mathcal{W}^2_i)$ is Lipschitz and has compact values. Hence, since $K_i \subseteq X_i \subseteq \dom(f_i(.,\mathcal{W}^1_i,\mathcal{W}^2_i))$, it follows from Theorem 5.3.4 in \cite{aubin2009viability} that (c) $\Leftrightarrow$ (d).

\medskip
\noindent\textbf{Step 3: (b) $\Rightarrow$ (d).}
Consider $i\in I$, $x_i\in\partial K_i$ and $w_i^{2}\in \mathcal{W}_i^{2}$. Since $\mathcal{W}^2_i:=\prod_{j\in \mathcal{N}(i)} h_j(K_j)$, there exists $\hat x_j\in K_j$ such that $w_i^{2}=\col(h_j(\hat{x}_j))_{j\in \mathcal{N}(i)}$. Now we define $z\in K$ by $z_i:=x_i$, $z_j:=\hat x_j$ for all $j \in \mathcal{N}(i)$ and $z_l \in K_l$ for all $l \in I \setminus \left(\mathcal{N}(i) \bigcup \{i\}\right)$. Since $x_i\in\partial K_i$, it follows that $z\in\partial K$. 
Now consider any $w_i^{1}\in \mathcal{W}_i^{1}$ and choose  $w^{1}\in \mathcal{W}^{1}$ whose $i$-th component equals $w_i^{1}$. Hence, one gets from (b) that $f(z,w^1) \subseteq T_K(z)$. Moreover using the inclusion $T_K(z) \subseteq \prod_{l \in I} T_{K_l}(z_l)$ from \eqref{eq:cone_inclusion_trivial}, and the fact that $z_i = x_i$, we obtain $\pi_i(T_K(z)) \subseteq T_{K_i}(z_i) = T_{K_i}(x_i)$. Since $F(z,w^1)$ is the column-stacking of $f_l(z_l, w_l^1, w_l^2)$ over $l \in I$ and the canonical projection $\pi_i$ is compatible with this stacking, we conclude:
    \[
    f_i(x_i,w_i^{1},w_i^{2})
    \;\subseteq\;
    \pi_i\bigl(f(z,w^1)\bigr)
    \;\subseteq\;
    \pi_i\bigl(T_{K}(z)\bigr)
    \;\subseteq\;
    T_{K_i}(x_i).
    \]


\medskip
\noindent\textbf{Step 4: (d) $\Rightarrow$ (b).}
Assume each $K_i$ is tangentially Lebesgue-dense at every  $x_i \in \partial K_i$ and that (d) holds.
Take any $x\in \partial K$ and define $w^2:=(w^2_1,\ldots,w^2_N)$ with $w_i^{2}=\col(h_j(x_j))_{j\in \mathcal{N}(i)} \in \mathcal{W}^2_i$. Now consider any $w^{1}\in \mathcal{W}^{1}$, with $w^1:=(w^1_1,\ldots,w^1_N)$. We have two cases: if $x_i \in \partial K_i$ it follows from (d) that
$f_i(x_i,w_i^{1},w_i^{2})\subseteq T_{K_i}(x_i)$, and if $x_i \in \intt(K_i)$, then it follows again that $f_i(x_i,w_i^{1},w_i^{2})\subseteq T_{K_i}(x_i)=\R^{n_i}$. Hence, one gets
\begin{align*}
 F(x,w^{1})
=&\Big\{\col(v_i)_{i\in I}\ \Big|\ v_i\in f_i(x_i,w_i^{1},w_i^{2})\ \forall i \in I\Big\}
\\ &\subseteq \prod_{i\in I}T_{K_i}(x_i) \subseteq T_K(x),   
\end{align*}
where the last inclusion follows from  Proposition~\ref{lem:product_cone} under the tangential Lebesgue-density of the sets $K_i$, $i\in I$, which concludes the proof.
\end{proof} 

\begin{remark} 
The tangential Lebesgue-density assumption in Theorem~\ref{thm:GBL} is satisfied by the specific class of practical sets, as defined in \cite{blanchini2008set}, which implies sleekness~\cite{aubin2009viability}, and hence tangential Lebesgue-density by Proposition~\ref{prop:sleek_implies_FDC}. Convex sets are also sleek, and hence tangentially Lebesgue-dense.
\end{remark}

Theorem~\ref{thm:GBL} establishes that robust forward invariance is a compositional property: Global invariance of the interconnected system $\Sigma$ is equivalent to local invariance of each subsystem $\Sigma_i$, $i \in I$, and can therefore be verified independently for each subsystem without knowledge of the full network state. In particular, the equivalence (a)$\Leftrightarrow$(c) shows that checking global invariance of $K$ reduces to checking local invariance of each $K_i$ under the coupling inputs $\mathcal{W}^2_i = \prod_{j \in \mathcal{N}(i)} h_j(K_j)$ generated by the neighboring invariant sets. Furthermore, the equivalence (c)$\Leftrightarrow$(d) reduces local invariance verification to the tangent cone inclusion
\[
f_i(x_i,\mathcal{W}^1_i,\mathcal{W}^2_i)\subseteq T_{K_i}(x_i),
\] 
which depends only on the local dynamics of $\Sigma_i$ and the geometry of $K_i$ at each boundary point $x_i \in \partial K_i$.

\section{Numerical Example}
\label{sec:numerical}

We illustrate the theoretical results on a DC microgrid composed of $N$ distributed generation units (DGUs), each equipped with a Buck converter, an LC filter, and a local droop-based primary controller, interconnected through resistive power lines. This model is standard in the literature~\cite{Tucci2018,Nasirian2015} and represents a class of large-scale networked control systems for which compositional invariance verification is practically essential.

\subsection{System model}

Each DGU $i$ consists of a generic renewable energy source (modelled as a battery) connected to a local load at the point of common coupling through a Buck converter and an LC filter. The primary control layer consists of a droop controller~\cite{Nasirian2015}, which is a decentralized static feedback law that regulates the point of common coupling (PCC) voltage $V_i$ by adjusting the current reference $I_i^{\mathrm{ref}}$ injected by the Buck converter proportionally to the deviation of $V_i$ from the nominal voltage $V_i^{\mathrm{nom}}$:
\begin{equation}
\label{eq:droop}
I_i^{\mathrm{ref}} = I_i^{\mathrm{nom}} - \frac{1}{R_i^d}
\bigl(V_i - V_i^{\mathrm{nom}}\bigr),
\end{equation}
where $R_i^d > 0$ is the droop resistance and $I_i^{\mathrm{nom}}$ is the nominal current setpoint. The droop resistance governs the trade-off between voltage regulation accuracy and current sharing among DGUs: a smaller $R_i^d$ improves voltage regulation at the cost of reduced current sharing flexibility~\cite{Nasirian2015}. Under the quasi-stationary line approximation~\cite{Tucci2018}, which neglects line inductances since their time constants are much smaller than those of the converters, and after substituting~\eqref{eq:droop}, the closed-loop dynamics of DGU $i$ reduce to the scalar first-order system
\begin{equation}
\label{eq:DGU}
C_i \dot{V}_i = -\frac{V_i - V_i^{\mathrm{nom}}}{R_i^d} + I_i^{\mathrm{nom}}
- I_i^L(t) - \sum_{j \in \mathcal{N}(i)} \frac{V_i - V_j}{R_{ij}},
\end{equation}
where $V_i \in \mathbb{R}$ is the PCC voltage (state), $I_i^L(t) \in \mathbb{R}$ is the local load current (external disturbance), $V_j$ are the PCC voltages of the neighbouring DGUs (internal inputs), $C_i > 0$ is the filter capacitance, and $R_{ij} > 0$ is the resistance of the power line connecting DGU $i$ to DGU $j$.

Equation~\eqref{eq:DGU} fits exactly the subsystem framework of Definition~\ref{def:agent} with $x_i = V_i \in X_i \subset \mathbb{R}$, $w_i^1 = I_i^L \in W_i^1 \subset \mathbb{R}$, $w_i^2 = \col(V_j)_{j \in \mathcal{N}(i)} \in W_i^2$, and $y_i = h_i(x_i) = V_i$. The map $f_i$, defined as a single-valued map in this example, is given by the right-hand side of~\eqref{eq:DGU} and is affine in all its arguments, hence Lipschitz with compact values, and Assumption~\ref{ass:SA} is satisfied. The physical parameters are taken from~\cite{Tucci2018} and are given by $C_i = 2.2 \times 10^{-3}$ F, $R_i^d = 0.1 \Omega$, $R_{ij} = 0.5\ \Omega$, $V_i^{\mathrm{nom}} = 48$ V, and $I_i^{\mathrm{nom}} = 5$ A, for all $i \in I$.

The load current $I_i^L(t)$ is modelled as a piecewise-constant signal taking values in $\mathcal{W}_i^1 = [0, 12]$ A, composed of a common slow step shared by all DGUs (alternating every $5$ s between $\approx 1.5$ A and $\approx 10.5$ A), an individual fixed offset per DGU drawn uniformly from $[-1.5, 1.5]$ A, affected by a 5$\%$ noise. 

\subsection{Invariant sets and conditions verification}

The voltage safety requirement at each DGU is that the PCC voltage remains within the interval
\[
K_i = [\underline{V}_i,\, \overline{V}_i] = [47.2,\, 48.6]\ \text{V},
\]
corresponding to a band of approximately $\pm 1.5\%$ around the nominal voltage $V_i^{\mathrm{nom}} = 48$ V. The global safe set is $K = \prod_{i=1}^N K_i \subset \mathbb{R}^N$. Since each $K_i$ is a closed interval, it is convex, hence sleek, and therefore tangentially Lebesgue-dense at every boundary point in the sense of Definition~\ref{def:TLD}, by Proposition \ref{prop:sleek_implies_FDC}. The assumptions of Theorem~\ref{thm:GBL} are therefore satisfied. By Theorem~\ref{thm:GBL}, global robust forward invariance of $K$ for the interconnected microgrid $\Sigma$ under $\mathcal{W}^1 = \prod_{i=1}^N [0,12]$ is equivalent to local robust forward invariance of each $K_i$ for subsystem $\Sigma_i$ under $(\mathcal{W}_i^1, \mathcal{W}_i^2)$, which in turn is equivalent to condition~(d): for all $i \in I$ and all $V_i \in \partial K_i = \{\underline{V}_i, \overline{V}_i\}$,
\[
f_i(V_i,\, \mathcal{W}_i^1,\, \mathcal{W}_i^2) \subseteq T_{K_i}(V_i).
\]
Since $K_i$ is an interval, $T_{K_i}(\underline{V}_i) = \mathbb{R}_{\geq 0}$ and $T_{K_i}(\overline{V}_i) = \mathbb{R}_{\leq 0}$. The coupling constraint set is $\mathcal{W}_i^2 = \prod_{j \in \mathcal{N}(i)} h_j(K_j) = \prod_{j \in \mathcal{N}(i)} [\underline{V}_j, \overline{V}_j]$. A straightforward computation shows that condition (d) of Theorem~\ref{thm:GBL} is satisfied for every DGU $i \in I$. 

\begin{remark}
Let us mention that Condition~(d) of Theorem~\ref{thm:GBL} requires only $2N$ scalar checks, one pair per DGU, each involving only local parameters and the safety bounds of direct neighbours, making the total complexity linear in $N$. By contrast, condition~(b) requires verifying $F(V, w^1) \subseteq T_K(V)$ for all $V \in \partial K$, which is an uncountably infinite set in $\mathbb{R}^N$ and admits no finite verification procedure in general. Related works~\cite{Zonetti2019NECSYS,Zonetti2019ECC} establish compositional invariance results for DC microgrids, but only in the sufficient direction: local invariance implies global invariance, corresponding to (c)$\Rightarrow$(a) in Theorem~\ref{thm:GBL}. The reverse implication is not established in either reference, whereas Theorem~\ref{thm:GBL} provides the first if-and-only-if equivalence for continuous-time interconnected systems.
\end{remark}

\subsection{Simulation results}

We simulate a network configuration with $N = 100$ DGUs with a \emph{$k$-nearest-neighbour topology} with $k = 6$, where each DGU is connected to its three left and three right neighbours on a ring; this is a denser and more realistic interconnection structure that tests the robustness of the invariance certificate to richer coupling patterns. The initial conditions are spread uniformly across $K_i$ as
\[
V_i(0) = \underline{V}_i + \frac{i-1}{N-1}\bigl(\overline{V}_i -
\underline{V}_i\bigr), \quad i = 1, \ldots, N,
\]
so that DGU $1$ starts exactly at the lower boundary $\underline{V}_i = 47.2$ V and DGU $N$ starts exactly at the upper boundary $\overline{V}_i = 48.6$ V. These initial conditions lie on $\partial K$ and Theorem~\ref{thm:GBL} guarantees that no trajectory can escape even when started at the extremes of the safe set. Figures~\ref{fig:N100_volt} and~\ref{fig:N100_dist} show the voltage trajectories and disturbance signals for $N = 100$. One can see that all trajectories remain strictly inside $K_i = [47.2, 48.6]$ V for all $t \in [0, 20]$ s, despite the persistent, highly irregular, and unpredictable load disturbances. The voltage trajectories exhibit two distinct regimes corresponding to the two levels of the common disturbance step. When $I_i^L(t) \approx 10.5$ A, the network equilibrium lies near $V_i^* = V_i^{\mathrm{nom}} + R_i^d(I_i^{\mathrm{nom}} - 10.5) = 47.35$ V, causing all voltages to approach the lower boundary. When $I_i^L(t) \approx 1.5$ A, the equilibrium lies near $V_i^* = 48.35$ V, and voltages approach the upper boundary.

The $N = 100$ simulation further illustrates the scalability of the approach. Condition~(d) requires only $2 \times 100 = 200$ scalar inequality checks. Adding a new DGU to the network requires checking only two additional inequalities for that DGU and does not require recomputing or modifying the certificates of any existing DGU, which is the plug-and-play property enabled by the compositional structure of Theorem~\ref{thm:GBL}.

\begin{figure}[t]
    \centering
    \includegraphics[width=\columnwidth]{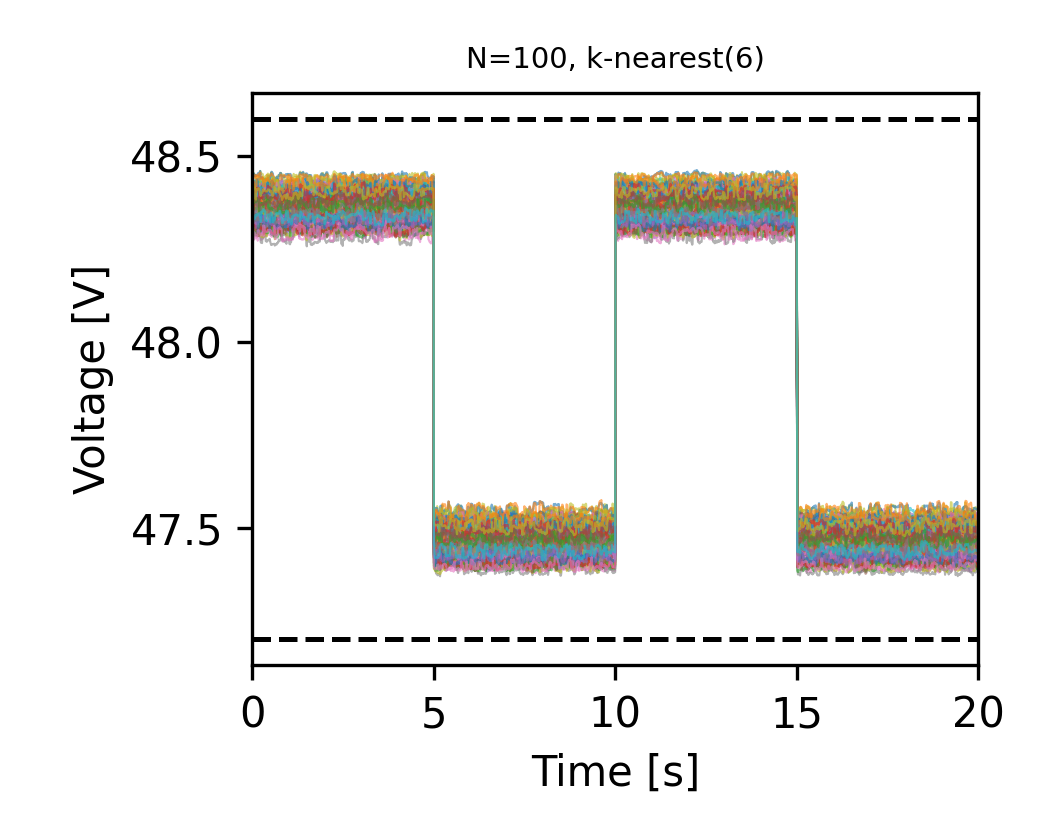}
    \caption{Voltage trajectories $V_i(t)$ for $N=100$ DGUs with
    $k$-nearest-neighbour topology ($k=6$). All $100$ trajectories remain
    inside $K_i = [47.2, 48.6]$ V for all $t \in [0,20]$ s. Condition~(d)
    is verified by $2 \times 100 = 200$ scalar inequalities, independently of
    the network topology.}
    \label{fig:N100_volt}
\end{figure}

\begin{figure}[t]
    \centering
    \includegraphics[width=\columnwidth]{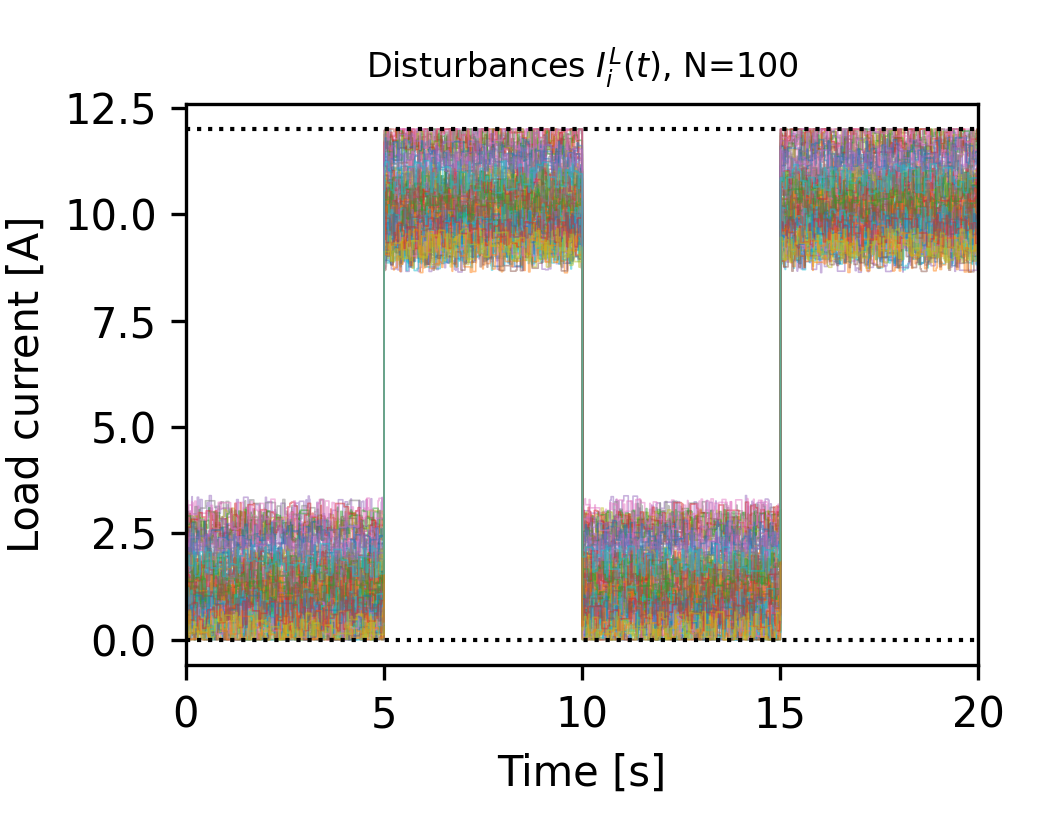}
    \caption{Piecewise-constant load disturbances $I_i^L(t) \in [0, 12]$ A
    for $N=100$ DGUs.}
    \label{fig:N100_dist}
\end{figure}

\section{Conclusion}

This paper established the first if-and-only-if compositional 
invariance result for continuous-time interconnected systems: 
robust forward invariance of the global product set 
$K = \prod_{i \in I} K_i$ for the interconnected system is 
equivalent to robust forward invariance of each local set $K_i$ 
for the corresponding subsystem under coupling inputs from 
neighboring invariant sets. The key geometric enabler is the 
new notion of tangential Lebesgue-density, which is strictly 
weaker than classical sleekness yet sufficient to ensure that 
the product of local tangent cones equals the tangent cone of 
the product set. Scalability is demonstrated on a DC microgrid where verification complexity grows 
linearly in the number of subsystems.

Two natural extensions are left for future work. First, 
the present result assumes that the system evolves on the 
whole state space $\mathbb{R}^n$. Extending the compositional 
framework to differential inclusions defined on constraint sets, as studied 
in~\cite{Reynaud2025} would broaden the class of 
systems covered. Second, while the present result is stated for uncontrolled systems, an important 
practical extension is the co-design of decentralized 
controllers that simultaneously render the local sets invariant.

\bibliographystyle{ieeetr}
\bibliography{bibliography} 

\end{document}